\newtheorem{theorem}{Theorem}
\newtheorem{lemma}{Lemma}
\DeclareMathOperator{\diag}{diag}
\DeclareMathOperator{\rank}{rank}
\DeclareMathOperator{\Tr}{Tr}
\newcommand\numberthis{\addtocounter{equation}{1}\tag{\theequation}}
\begin{document}
%
\title{Managing Interference and Leveraging Secondary Reflections Amongst Multiple IRSs}

\author{\IEEEauthorblockN{Tu V. Nguyen,}
\IEEEauthorblockA{College of Engineering and Computer Science\\
VinUniversity, Ha Noi, Viet Nam\\
Email: tu.nv@vinuni.edu.vn\\}
\and
\IEEEauthorblockN{Diep N. Nguyen,}
\IEEEauthorblockA{School of Electrical and Data Engineering\\
University of Technology Sydney, Australia\\
Email: Diep.Nguyen@uts.edu.au}}


\thispagestyle{plain}
\pagestyle{plain}

\maketitle

\begin{abstract}
Intelligent reflecting surface (IRS) has recently been emerging as an enabler for smart radio environment in which passive antenna arrays can be used to actively tailor/control the radio propagation (e.g., to support users under adverse channel conditions). With multiple IRSs being launched (e.g., coated on various buildings) to support various group of users, it is critical to jointly optimize the phase-shifts of all IRSs to mitigate the interference as well as to leverage the secondary reflections amongst IRSs. This work takes the first step by considering the uplink of multiple users that are grouped and supported by multiple IRSs to a multi-antenna base station. Each IRS with multiple controllable phase-shift elements is intended to serve a group of near-by users. We first formulate the minimum achievable rate (from all users) maximization problem by jointly optimizing phase-shifts of elements from all IRSs and the received beamformers at the MIMO base station. The problem turns out to be non-convex. We then derive its solution using the alternating optimization mechanism. Our simulations show that by properly managing interference and leveraging the secondary reflections amongst IRSs, there is a great benefit of deploying more IRSs to support different groups of users to achieve a higher rate per user. In contrast, without properly managing the secondary reflections, increasing the number of IRSs can adversely impact the network throughput, especially for higher transmit power. 
\end{abstract}

\begin{IEEEkeywords}
Intelligent  reflecting  surface (IRS), multiple IRSs, multiple MIMO users, cooperative beamforming design, managing and leveraging multi-user multi-IRS interference.
\end{IEEEkeywords}

\IEEEpeerreviewmaketitle

\section{Introduction}

Intelligent reflecting surface (IRS) has been emerging as an enabler of smart radio environment in which radio propagation can be deliberately controlled/tailored by passive reflectors. A wireless system assisted by IRS hence can ``reconfigure" the environment to combat the shadowing and fading issues or to create well-scattered environment for spatial multiplexing. An IRS is a planar metasurface that is made up of a large number of passive reflecting elements. Each of these elements can be optimized to alter the amplitude and/or phase shift of the reflected signal onto it. These elements hence together can help effectively ``reshape" the wireless channels \cite{ruizhang2019gaussian}. Such an ability to reconfigure the radio medium gives us another dimension of freedom to design wireless systems. Note that all conventional wireless optimization/designs take channel/radio environment as an input to adapt with, instead of actively ``tailoring/optimizing" it \cite{zheng2020fast}.

With multiple IRSs being launched (e.g., coated on various buildings) to support various group of users, it is critical to jointly optimize the phase-shifts and/or amplitudes of all IRSs to mitigate the interference as well as leverage the secondary reflections amongst IRSs. However, due to the multi-user interference, this type of problem is widely known to be non-convex. The problem even becomes more challenging when we need to jointly optimize not only the phase-shifts amongst all IRSs but also the beamforming vectors at the BS for other users. Multiple IRSs systems have recently been investigated, e,g., \cite{YangVPoor2020}-\cite{zheng2020doubleirs}. In \cite{YangVPoor2020}, the authors studied a resource allocation problem for a downlink wireless communication network with multiple distributed IRSs to maximize the system energy efficiency with maximum transmit power constraint and minimum rate requirements. In \cite{LiFangGao2019}, the authors considered a problem of maximizing the minimum SINR amongst users by jointly optimizing the transmit precoding vector at the BS and phase-shifts at IRSs for a downlink multiple IRSs system. In \cite{YangCosta2020}, the outage probability and average sumrate for a single source, single destination and multiple IRSs system were studied where the best IRS is selected to be active at a time. Note that all aforementioned works did not study the interference management nor leverage the cooperation amongst IRSs via the ``secondary" reflection. In the latest work \cite{zheng2020doubleirs}, the authors considered the beamforming design for the case with two IRSs in which one is placed close to the BS and the other is placed close to users. The authors show that there is a significant gain as compared to the case with a single IRS. Note that with only two IRSs that both aim to support a single group of users, the problem in \cite{zheng2020doubleirs} also does not account for the interference amongst IRSs as well as different groups of users. In practice, especially in a dense urban environment where shadowing and fading are severe (e.g., caused by multiple buildings/structures), multiple IRSs can be deployed within one cell to support various groups of users whose direct links to the BS are weak or blocked.  

Given the above, this work considers the uplink of multiple users that are grouped and supported by multiple IRSs to a multi-antenna BS, as shown in Figure \ref{fig:multiIRS}. Due to the close proximity amongst these groups of users, the IRSs can be deployed close to each other. It is clear that such a scenario requires us to deal with the interference amongst IRSs. However, we observe that we can also leverage the ``secondary" reflection amongst IRSs to enhance the signal reception at the BS. Specifically, the signal from a given user not only reach the BS by reflecting onto the dedicated IRS for the user's group (referred to as the ``primary" reflection) but also can traverse to and reflect on nearby IRSs before reaching the BS (referred to as the ``secondary" reflection). As such, optimizing/tuning the phase-shifts at one IRS \cite{WuZhang2019,YouZhengZhang2020,HuangZappone2019} or multiple IRSs but for a single group of users (e.g., \cite{YangVPoor2020}-\cite{zheng2020doubleirs})
is inapplicable. Note that these secondary reflections can either contribute to the (desired) signal detection (friends) for a given user or come as unwanted interference (foes) for other users at the BS. That require us to jointly optimize the IRS elements for \emph{all IRSs} and beamforming vectors for \emph{all users}. Note that results reported in \cite{zheng2020doubleirs} is a special case of the aforementioned problem by setting the number of IRSs to two and assign no user to the second IRS. 
Our simulations show that by properly managing interference and leveraging the secondary reflections amongst IRSs, there is a great benefit of deploying more IRSs to support different groups of users to achieve a higher rate per user.  
For example, we observed a 1.5 bps/Hz or 2.4 bps/Hz gain for a 2 or 4 IRSs system as compared to a single IRS system, respectively, or a 2.6 bps/Hz gain when exploiting the secondary reflection for a system with 6 IRSs.


The rest of this paper is organized as follows. In Section II, the system model and the problem formulation for multi-IRS system are presented. In Section III, a sub-optimal solution are presented. In Section IV, numerical results are discussed, and conclusions are drawn in Section V.

\section{System Model}
\subsection{Introduction and Notation}

Let's consider a system with an $N$-antenna BS and $K$ users. Due to the blockages, users are partitioned/clustered into $L$ groups whose direct radio links to the BS are blocked. That is often the case in dense CBD/urban areas with high-rise buildings and structures. $L$ IRSs are then used to support these users, as depicted in Figure \ref{fig:multiIRS}. We assume that each IRS $l$th is comprised of $M_l$ sub-surfaces and 
serves $K_l$ single-antenna users in the $l$-th group $l=1,2,\ldots,L$. As aforementioned, unlike the special case in \cite{zheng2020doubleirs} which only considered two IRSs, one group of users and single directional reflection between two IRSs, we assume each IRS is placed close to a group of users and signals reflected between any two IRSs are bi-directional. Specifically, the signal from the $k$-th user in the $l$-th group (denoted as the ($l,k$)-th user for short) reaches the BS by propagating from the user to IRS-$l$ to the BS and also via the reflection from IRS-$l$ to IRS-$l'$ to the BS (with $l'\neq l$).  Similarly the signal of the $(l',i)$ user can also reach the BS by traveling from the $(l',i)$-th user to the IRS-$l'$ to IRS-$l$ to the BS. For that, we can leverage that fact that signal can travel between the IRS-$l$ and the IRS-$l'$ in both direction (yet the signals are originated from different users) or the cooperation amongst IRSs. Given this observation, the phase-shift design for the IRSs and the receive beamforming at the BS need to take in to account both the primary IRS reflections and the secondary reflections amongst IRSs.  
\begin{figure}[htbp]
  \centering
  \includegraphics[width=90mm,height=63mm,scale=1.0]{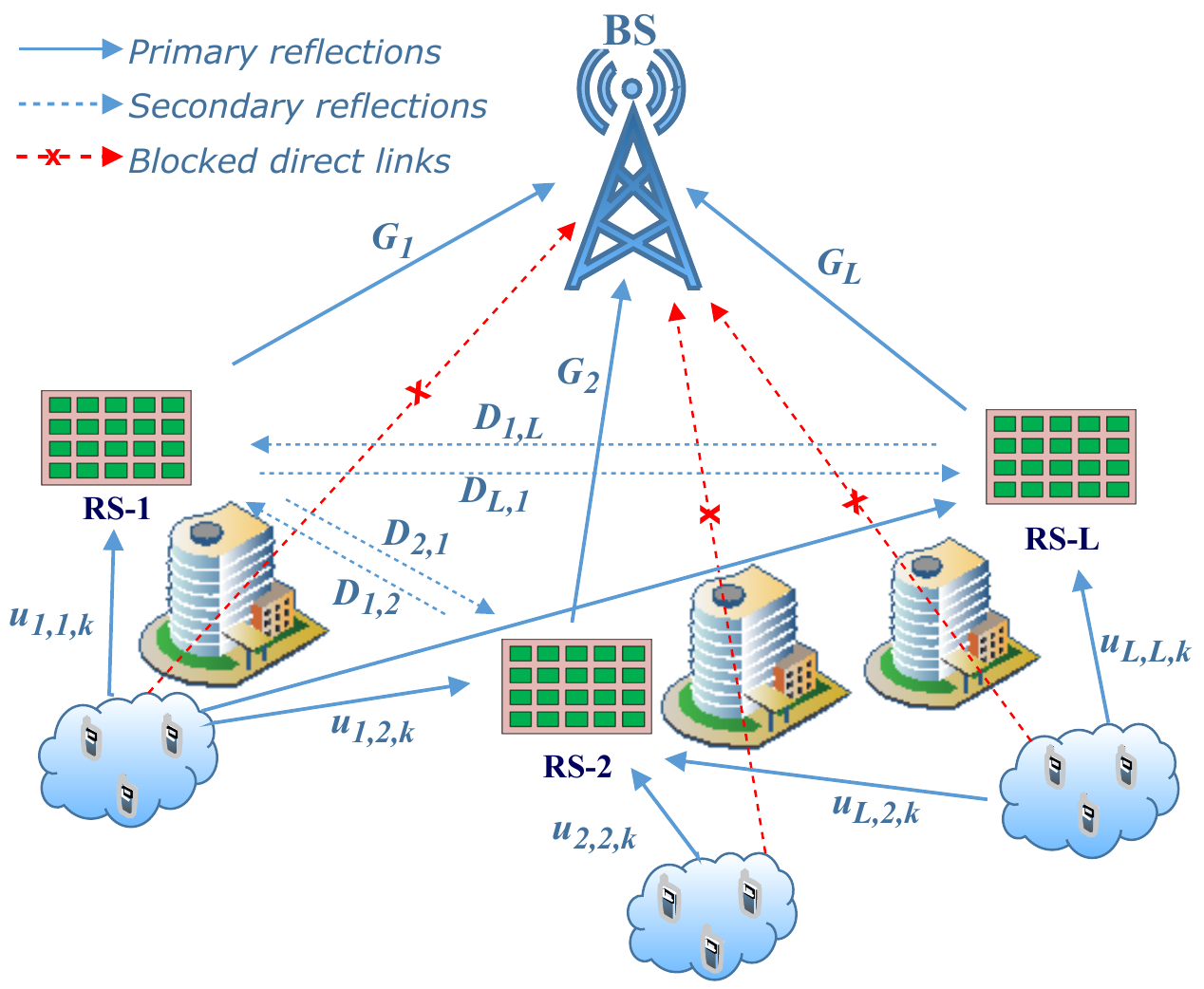}
  \caption{Multi-IRS assisting multi-groups of user MIMO communication system.}
  \label{fig:multiIRS}
\end{figure}

Let $\boldsymbol{u}_{l',l,k}\in \mathbb{C}^{M_{l'}\times1}$,  $\boldsymbol{D}_{l,l'}\in \mathbb{C}^{M_l\times M_{l'}}$, $\boldsymbol{G}_l\in \mathbb{C}^{N\times M_l}$ denote baseband equivalent channel matrices for the user $k$ in the group $l$ to the IRS-$l'$, the IRS-$l'$ to the IRS-$l$ (for $l'\neq l$), and IRS-$l$ to the BS links, respectively, with $l,l'=1,2,\ldots,L$ and $k=1,2,\ldots,K_l$. Let $\boldsymbol{\theta}_l \in \mathbb{C}^{M_{l}\times1}$ denote the phase-shifts vector of the IRS-$l$ and $K=\sum_{l=1}^L K_l$ denote the total number of users.

In general, the signal from the $(l,k)$-th user can reach the BS by one of the following paths via at least one IRS: (i) single IRS reflection from the $(l,k)$-th user onto the $l'$-th IRS to the BS, (ii) double IRSs reflections from $(l,k)$-th user to the $l$-th IRS, then to the $l'$-th IRS to the BS (where $l\neq l'$), (iii) triple or more IRSs reflections, e.g., from $(l,k)$-th user to the $l$-th IRS, then to the $l'$-th IRS to the $l''$-th IRS to the BS (where $l$, $l'$ and $l''$ are pair-wise different). In practice, if a signal travels through many IRSs, due to much larger effective pathloss, the signal arrival at the BS become very weak to be considered. Hence, in this paper, we ignore the type (iii), the tripple and more IRSs reflection mentioned above, i.e., we ignore any signal that reflected through three or more different IRSs. Also, for the type (ii), if the signal of an user is reflected from two IRSs that are far away from each other, the secondary reflection in this case is also too weak to be considered.

The signals from the $(l,k)$-th user arrive at the BS are the $L$ reflection paths via each of the $L$ IRSs (type (i) single reflection) which we called \emph{primary} refection, and the $L-1$ paths reflection from the $l$-th IRS to the remaining $l'\neq l$ IRS (type (ii) double/secondary reflection). Again, we assume the direct links from users to the BS do not exist (or too weak to be considered). Thus, the effective channel from the $(l,k)$-th user can be written as follows
\begin{align*}
    \boldsymbol{h}_{l,k} =& \sum_{l'=1}^L \boldsymbol{G}_{l'}\boldsymbol{\Phi}_{l'}  \boldsymbol{u}_{l',l,k} + \sum_{{l'=1},{l'\neq l}}^L \boldsymbol{G}_{l'}\boldsymbol{\Phi}_{l'} \boldsymbol{D}_{l',l} \boldsymbol{\Phi}_l \boldsymbol{u}_{l,l,k} \numberthis \label{eqn:chan_mk}
\end{align*}
for $l=1,2,\ldots,L$, and $k=1,2,\ldots,K_l$, where $\boldsymbol{\Phi}_l=\mathrm{diag}(\boldsymbol{\theta}_l)$ denotes the diagonal reflection matrix of IRS-$l$. 

Let denote $\boldsymbol{R}_{l',l,k}=\boldsymbol{G}_{l'} \mathrm{diag}(\boldsymbol{u}_{l',l,k})$ be the cascaded channel matrix from the $(l,k)$-th user to the $l'$-th IRS to the BS (without the phase shifts at the $l'$-th IRS). Also, let denote $\tilde{\bm{D}}_{l',l,k}=\left[\tilde{\bm{d}}_{l',l,k,1},\ldots,\tilde{\bm{d}}_{l',l,k,M_{l'}}\right]\triangleq \bm{D}_{l',l}\mathrm{diag}(\bm{u}_{l,l,k})$, we can rewrite (\ref{eqn:chan_mk}) as follow
\begin{align*}
    \boldsymbol{h}_{l,k} =&  \sum_{l'=1}^L\boldsymbol{R}_{l',l,k}\boldsymbol{\theta}_{l'} + \sum_{{l'=1},{l'\neq l}}^L \boldsymbol{G}_{l'}\boldsymbol{\Phi}_{l'} \tilde{\bm{D}}_{l',l,k}\boldsymbol{\theta}_l \\
    =& \sum_{l'=1}^L\boldsymbol{R}_{l',l,k}\boldsymbol{\theta}_{l'} + \sum_{{l'=1},{l'\neq l}}^L \sum_{j=1}^{M_{l}} \boldsymbol{Q}_{l',l,k,j} \boldsymbol{\theta}_{l'} \theta_{l,j} \numberthis\label{eqn:channelmk}
\end{align*} 
\normalsize
where $\boldsymbol{Q}_{l',l,k,j}\triangleq\boldsymbol{G}_{l'} \mathrm{diag}(\tilde{\bm{d}}_{l',l,k,j})$ denotes the cascaded channel matrix from the $(l,k)$-th user to the $j$-th subsurface element of the IRS-$l$ to the IRS-$l'$ to the BS without the phase shifts at the IRS-$l$ and the IRS-$l'$, (where $l=1,2,\ldots,L$, $j=1,2,\ldots,M_l$ and $k=1,2,\ldots,K_l$). From equation (\ref{eqn:channelmk}), we can see that it suffices to estimate $\{\boldsymbol{R}_{l',l,k}\}$ and $\{\boldsymbol{Q}_{l',l,k,j}\}$ for jointly designing the passive beamforming coefficients $\{\boldsymbol{\theta}_l\}$ in the multiple IRS cooperative system \cite{zheng2020doubleirs, zheng2019chanest}. In this paper, we assume all the cascaded channel matrices $\{\boldsymbol{R}_{l',l,k}\}$ and $\{\boldsymbol{Q}_{l',l,k,j}\}$ are known at the BS.

During the uplink data transmission, the received signal at the BS is given by
\begin{align*}
    \boldsymbol{y} =& \sum_{l=1}^L\sum_{k=1}^{K_l} \boldsymbol{h}_{l,k} s_{l,k} + \boldsymbol{n} \numberthis \label{aln:mrxulsignal2}\\
    =& \sum_{l=1}^L \sum_{k=1}^{K_l} \left(\sum_{l'=1}^L \boldsymbol{R}_{l',l,k} \boldsymbol{\theta}_{l'} + \right. \sum_{{}^{l'=1}_{l'\neq l}}^L \left.\sum_{j=1}^{M_{l}} \boldsymbol{Q}_{l',l,k,j} \boldsymbol{\theta}_{l'} \theta_{l,j}\right) s_{l,k} + \boldsymbol{n} 
\end{align*}
\normalsize
where $s_{l,k}$ is the $(l,k)$-th user's transmitted data symbol with the transmit power of $P_{l,k}$, and $\boldsymbol{n}\sim\mathcal{N}_c(0, \sigma^2\boldsymbol{I})$ is the additive white Gaussian noise (AWGN) vector at the BS with $\sigma^2$ being the equivalent noise power. At the BS, a linear receive beamforming vector $\boldsymbol{w}_{l,k} \in \mathbb{C}^{N\times 1}$ is applied to decode each $s_{l,k}$, written as
\begin{align*}
    \tilde{y}_{l,k}=~&\boldsymbol{w}_{l,k}^H\boldsymbol{y}\\ =~&\boldsymbol{w}_{l,k}^H \sum_{l'=1}^L\sum_{k'=1}^{K_l} \left(\sum_{l''=1}^L \boldsymbol{R}_{l'',l',k'}\boldsymbol{\theta}_{l''} + \right. \sum_{l''\neq l'}^L \sum_{j=1}^{M_{l'}} \boldsymbol{Q}_{l'',l',k',j} \boldsymbol{\theta}_{l''} \theta_{l',j}\Bigg) s_{l',k'} + \boldsymbol{w}_{l,k}^H\boldsymbol{n} \numberthis \label{eqn:mrxulsignal_k}
\end{align*}
\normalsize
Therefore, the signal to interference and noise ratio (SINR) for decoding the information from the $(l,k)$-th user is given by (\ref{eqn:gamma_mk}).

\begin{align}
\label{eqn:gamma_mk}
\gamma_{l,k} = \frac{P_{l,k}\left| \boldsymbol{w}_{l,k}^H \left( \sum\limits_{l'=1}^L\boldsymbol{R}_{l',l,k}\boldsymbol{\theta}_{l'} + \sum\limits_{l'\neq l}^L \sum\limits_{j=1}^{M_l} \boldsymbol{Q}_{l',l,k,j} \boldsymbol{\theta}_{l'} \theta_{l,j} \right)  \right|^2 }
{\sum\limits_{{(l',k')}{\neq(l,k)}} P_{l',k'} \left|\boldsymbol{w}_{l,k}^H   \left(\sum\limits_{l''=1}^L\boldsymbol{R}_{l'',l',k'}\boldsymbol{\theta}_{l''} + \sum\limits_{l''\neq l'}^L \sum\limits_{j=1}^{M_{l'}} \boldsymbol{Q}_{l'',l',k',j} \boldsymbol{\theta}_{l''} \theta_{l',j}\right) \right|^2 + \sigma^2\boldsymbol{w}_{l,k}^H \boldsymbol{w}_{l,k} }
\end{align}

\subsection{Uplink Joint Multi-User MIMO Receive Beamforming and Multi-IRS Optimization Problem}

Given the SINR in (\ref{eqn:gamma_mk}) for all $(l,k)$ users, we can optimize $\boldsymbol{w}_{l,k}$ and $\{\boldsymbol{\theta}_l\}$'s to simultaneously mitigate the IRS interference and leverage the secondary reflections amongst IRSs at the BS. In this paper, similar to \cite{zheng2020doubleirs}, let's first aim to maximize the minimum achievable rate among all users, that is
\setcounter{equation}{5}
\begin{align*}
\mathrm{(P1): }~ & \numberthis \label{meqn:optP1} \max_{\{\boldsymbol{w}_{l,k}\},\{\boldsymbol{\theta}_l\}} ~~~\min_{\{l,k\}} ~~ \log_2(1+\gamma_{l,k}) \\
    ~~& \mathrm{s.t.} ~~~ |\theta_{l,j}|=1,~ \forall l=1,2,\ldots,L,~ j=1,2,\ldots,M_l,
\end{align*}
where $\gamma_{l,k}$ is given in (\ref{eqn:gamma_mk}), and the constraints $|\theta_{l,j}|=1$ is due to the IRS infection type assumption.

\begin{lemma}
    \label{theorem:nonConvexP1}
    The problem (P1) in (\ref{meqn:optP1}) is non-convex.
\end{lemma}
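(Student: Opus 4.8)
The plan is to establish non-convexity by exhibiting that the feasible set of (P1) is itself non-convex; since a maximization problem is convex only when a concave objective is maximized over a convex constraint set, it is enough to show that one of these two requirements fails, and the constraint set is the most transparent target. First I would isolate the unit-modulus constraints $|\theta_{l,j}|=1$, which for each reflecting element describe the unit circle $\{z\in\mathbb{C}:|z|=1\}$. This circle is the prototypical non-convex set: I would take the two feasible values $\theta_{l,j}=1$ and $\theta_{l,j}=-1$ and observe that their convex combination $\tfrac{1}{2}(1)+\tfrac{1}{2}(-1)=0$ has modulus $0\neq 1$, hence is infeasible. Because the feasible region for the full collection $\{\boldsymbol{\theta}_l\}$ is the Cartesian product of such circles, it inherits this non-convexity, and this alone establishes that (P1) is a non-convex problem.

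To show that the non-convexity is intrinsic rather than an artifact of how the constraints are written, I would additionally verify that the objective is non-concave in the optimization variables. From (\ref{eqn:gamma_mk}), each $\gamma_{l,k}$ is a ratio whose numerator and denominator are both quadratic in the receive beamformer $\boldsymbol{w}_{l,k}$ and, through the bilinear products $\theta_{l'}\theta_{l,j}$ arising from the secondary reflections, are higher than quadratic in the entries of $\{\boldsymbol{\theta}_l\}$. Since the decision variables appear simultaneously in the numerator and the denominator, $\log_2(1+\gamma_{l,k})$ is in general neither concave nor convex; even with the beamformers held fixed, the coupling terms $\theta_{l'}\theta_{l,j}$ already preclude concavity in $\{\boldsymbol{\theta}_l\}$. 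Taking the pointwise minimum over $(l,k)$ preserves concavity only when every term is concave, which fails here, so the objective of (P1) is non-concave as well.

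I do not anticipate a genuine computational obstacle; the only care needed is conceptual. Specifically, I would fix the notion of non-convexity in play --- namely that (P1), in its natural form, maximizes a non-concave function over a non-convex set --- and note that exhibiting a single pair of feasible points whose convex combination is infeasible is sufficient to rule out convexity of the feasible set. I would also be careful not to over-claim: the lemma asserts only that the problem \emph{as posed} is non-convex, and makes no statement about the (non)existence of an equivalent convex reformulation, so the counterexample on the unit-modulus constraint is precisely what the proof must deliver.
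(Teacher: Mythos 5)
Your argument is correct and matches the paper's own proof, which likewise establishes non-convexity of (P1) by noting that the unit-modulus constraints $|\theta_{l,j}|=1$ define a non-convex feasible region; your explicit counterexample ($\theta_{l,j}=1$ and $\theta_{l,j}=-1$ averaging to $0$) simply makes that observation concrete. The additional discussion of the objective's non-concavity is a valid bonus but not needed, since the paper rests the claim entirely on the feasible set.
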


\begin{proof}
It can be observed due to the probem's non-convex feasible region (because $|\theta_{l,j}|=1$ are non-convex).
\end{proof}


In the sequel we will use the alternating optimization (AO) algorithm to find a sub-optimal solution to the non-convex problem (P1).

\section{AO Algorithm Based Solution}
In this section, we use the AO algorithm to solve for (\ref{meqn:optP1}) to obtain a sub-optimal solution. First we assume all the IRSs phase-shifts $\{\boldsymbol{\theta}_l\}_{l=1}^L$ are known and fixed, we then optimize the receive beamforming $\{\boldsymbol{w}_{l,k}\}$ for all users $(l,k)$. Then, for each $l=1,2,\ldots,L$, the $\boldsymbol{\theta}_l$ is optimized while the $\{\boldsymbol{w}_{l,k}\}$'s and all other $\{\boldsymbol{\theta}_{l'}\}_{l'\neq l}^L$ are fixed. Note that although the AO method is also used in \cite{zheng2020doubleirs}, in our case, solving the (P1) is not straightforward due to the complexity introduced by multiple IRSs, bi-directional reflection among IRSs, and multiple groups of users.

\subsection{Optimize $\{\boldsymbol{w}_{l,k}\}$'s for fixed $\{\boldsymbol{\theta}_l\}_{l= 1}^L$}

For fixed $\{\boldsymbol{\theta}_l\}_{l= 1}^L$, the effective channel of each user $\boldsymbol{h}_{l,k}$ in (\ref{eqn:channelmk}) is fixed and thus problem (P1) is reduced to $K$ sub-problems, each of which is to maximize the SINR of the $(l,k)$-th user as given in (\ref{eqn:gamma_mk}) and can be formulated as

\begin{align*}
\mathrm{(P2):}~ \numberthis \label{eqn:optP4} \max_{\boldsymbol{w}_{l,k}} ~ \frac{P_{l,k}\boldsymbol{w}_{l,k}^H\boldsymbol{h}_{l,k}\boldsymbol{h}_{l,k}^H\boldsymbol{w}_{l,k}}{\boldsymbol{w}_{l,k}^H\Bigg(\sum\limits_{(l',k')\neq(l,k)} P_{l',k'}\boldsymbol{h}_{l',k'}\boldsymbol{h}_{l',k'}^H + \sigma^2\boldsymbol{I}\Bigg)\boldsymbol{w}_{l,k}}
\end{align*}
\normalsize
It can be shown that (P2) is a convex optimization problem, and in fact it has a closed-form and optimal solution \cite{MKayBook}. To simplify the notation, let's denote $\boldsymbol{H}=[\boldsymbol{H}_{1},\ldots,\boldsymbol{H}_{L}]\in\mathbb{C}^{N\times K}$ and $\boldsymbol{W}=[\boldsymbol{W}_{1},\ldots,\boldsymbol{W}_{L}]\in\mathbb{C}^{N\times K}$, where $\boldsymbol{H_l}=[\boldsymbol{h}_{l,1},\ldots,\boldsymbol{h}_{l,K_l}]\in\mathbb{C}^{N\times K_l}$ and $\boldsymbol{W_l}=[\boldsymbol{w}_{l,1},\ldots,\boldsymbol{w}_{l,K_l}]\in\mathbb{C}^{N\times K}$, denote the effective user-to-BS channel matrix and the receive beamforming matrix applied at the BS, respectively. The problem (P2) has the optimal minimum mean squared error (MMSE) solution as below \cite{MKayBook}
\begin{equation}
    \label{eqn:mmse_sol}
    \boldsymbol{W}_{MMSE} = (\boldsymbol{HPPH}^H+\sigma^2\boldsymbol{I})^{-1}\boldsymbol{HP}
\end{equation}
where $\boldsymbol{P}=\diag(\left[\sqrt{P_{1,1}},\sqrt{P_{1,2}},\ldots,\sqrt{P_{L,K_L}}\right])$ is a diagonal transmit power matrix for all $K$ users. In practice, the equivalent noise power, $\sigma^2$, needs to be estimated; the solution to the problem (P2) can often be simplified to a sub-optimal zero-forcing (ZF) solution where the noise term $\sigma^2\boldsymbol{I}$ in (\ref{eqn:mmse_sol}) is omitted \cite{MKayBook}. 

\subsection{Optimize $\{\boldsymbol{\theta}_l\}_{l=1}^L$ for fixed $\{\boldsymbol{w}_{l,k}\}$'s}

For fixed $\{\boldsymbol{w}_{l,k}\}$'s, the problem (P1) in (\ref{meqn:optP1}) is equivalent to the following optimization problem in which we optimized the SINR $\gamma_{l,k}$ directly
\begin{align*}
\mathrm{(P3)} ~~~ & \numberthis \label{meqn:optP3} \max_{\{\boldsymbol{\theta}_l\}, \delta} ~~~\delta, \mathrm{~~~~s.t.}~~\gamma_{l,k} \geq \delta, ~ \mathrm{for all} ~(l,k), \\
    ~~& ~~~~|\theta_{l,j}|=1,~ \mathrm{for all} ~l=1,2,\ldots,L, ~ j=1,2,\ldots,M_l
\end{align*}

\begin{lemma}
    \label{theorem:nonConvexP3}
    The problem (P3) in (\ref{meqn:optP3}) is non-convex.
\end{lemma}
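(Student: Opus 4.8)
The plan is to show that the feasible region of (P3) is non-convex; since the objective $\delta$ is linear, and hence both convex and concave, any non-convexity of the problem must be traced entirely to its constraint set. A maximization is a convex optimization problem only if its feasible set is convex, so it suffices to exhibit two feasible points whose convex combination is infeasible.

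First I would isolate the unit-modulus constraints $|\theta_{l,j}|=1$, which are inherited directly from (P1) and, by the same reasoning behind Lemma \ref{theorem:nonConvexP1}, already carve out a non-convex set (a product of unit circles). To make the argument self-contained for (P3) specifically, I would build an explicit counterexample. Fix the receive beamformers and set $\delta=0$. Because each $\gamma_{l,k}$ in (\ref{eqn:gamma_mk}) is a ratio of non-negative quantities, the constraints $\gamma_{l,k}\geq 0$ hold automatically for every choice of phase-shifts, so with $\delta=0$ the only active constraints are the unit-modulus ones. Then picking a single coordinate $\theta_{l,j}=+1$ in one candidate point and $\theta_{l,j}=-1$ in a second (with all other entries identical) yields two feasible points whose midpoint has $\theta_{l,j}=0$, violating $|\theta_{l,j}|=1$. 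Hence the feasible set is non-convex and (P3) is non-convex.

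For completeness I would also point out that the non-convexity is not merely an artifact of the modulus constraints: even after relaxing $|\theta_{l,j}|=1$ to the convex disk $|\theta_{l,j}|\leq 1$, the SINR constraints $\gamma_{l,k}\geq\delta$ remain non-convex. Clearing the strictly positive denominator turns $\gamma_{l,k}\geq\delta$ into an inequality that is bilinear in $\delta$ and the squared-channel terms and, through the secondary-reflection products $\boldsymbol{\theta}_{l'}\theta_{l,j}$ appearing in (\ref{eqn:channelmk}), is in fact quartic in the stacked phase-shift variables. Such a difference of quadratic (indeed quartic) forms does not in general define a convex set.

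The main obstacle I anticipate is bookkeeping rather than conceptual: when constructing the two test points I must ensure both genuinely lie in the feasible region, i.e. that they satisfy the SINR constraints and not only the modulus constraints. Setting $\delta=0$ sidesteps this cleanly. If instead one wished to argue directly from the geometry of the SINR constraints (the second route above), the difficulty is that $\gamma_{l,k}$ couples all of the $\{\boldsymbol{\theta}_l\}$ simultaneously, so one cannot perturb a single coordinate in isolation and must exhibit a full pair of feasible phase configurations; this is precisely why the $\delta=0$ reduction to the unit-modulus torus is the more economical line of attack.
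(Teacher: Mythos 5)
Your argument is correct and rests on the same observation as the paper's own one-line proof: the unit-modulus constraints $|\theta_{l,j}|=1$ make the feasible region non-convex. Your version is simply more rigorous --- the $\delta=0$ device to neutralize the SINR constraints and the explicit midpoint counterexample make airtight what the paper merely asserts, and your side remark that the constraints $\gamma_{l,k}\geq\delta$ would remain non-convex even under a convex relaxation of the modulus constraints is a worthwhile addition the paper does not make.
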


\begin{proof}
	The problem (P3) is non-convex because its constraints  $|\theta_{l,j}|=1$ is not a convex set.
\end{proof}


We now can use the AO algorithm to solve (P3) by alternatively optimizing each $\boldsymbol{\theta}_l$ while considering all the other phase-shifts to be fixed as in the following subsection. 

\subsection{Optimize $\boldsymbol{\theta}_{\bar{l}}$ for fixed $\{\boldsymbol{\theta}_{l'}\}_{l'\neq \bar{l}}$ and fixed $\{\boldsymbol{w}_{l,k}\}$'s}
For fixed $\{\boldsymbol{\theta}_{l'}\}_{l'=1,l'\neq \bar{l}}^L$ and fixed $\{\boldsymbol{w}_{l,k}\}$'s, the problem (P3) is reduced to the following problem
\begin{align*}
\mathrm{(P3.1)} ~~~ & \numberthis \label{meqn:optP3.1} \max_{\boldsymbol{\theta}_{\bar{l}}, \delta} ~~~\delta,  \mathrm{~~~~s.t.}~~~ \gamma_{l,k} \geq \delta ~ \mathrm{for all} ~(l,k), \\
    ~~&  ~~~~~~~ |\theta_{\bar{l},j}|=1,~ \mathrm{for all} ~ j=1,2,\ldots,M_{\bar{l}}
\end{align*}
where $\bar{l}=1,2,\ldots,L$ is a predefined value. Note that this (P3.1) is also a non-convex problem due to the unity constraints on $|\theta_{\bar{l},j}|$ as mentioned in Lemmas  \ref{theorem:nonConvexP1} and \ref{theorem:nonConvexP3}. 

In the following, we first specify $\gamma_{l,k}$ as a function of $\boldsymbol{\theta}_{\bar{l}}$. There are two scenarios to be considered: i) the $(l,k)$-th user belongs to the $\bar{l}$-th IRS (i.e., $l=\bar{l}$) and ii) the $(l,k)$-th user does not belong to the $\bar{l}$-th IRS (i.e., $l\neq \bar{l}$).

\begin{theorem}
\label{theorem:GenericSINR}
The SINR of the $(l,k)$-th user at the BS can be written as a function of $\boldsymbol{\theta}_{\bar{l}}$ as follow.
\begin{align}
\label{meqn:rxsinr_mk_reduced}
\gamma_{l,k} = \frac{\left|\boldsymbol{q}^H_{\bar{l},l,k} \boldsymbol{\theta}_{\bar{l}} + \bar{q}_{\bar{l},l,k}\right|^2}
{\sum\limits_{(l',k')\neq (l,k)} \left| \boldsymbol{q}^H_{\bar{l},l',k'} \boldsymbol{\theta}_{\bar{l}} + \bar{q}_{\bar{l},l',k'}\right|^2 + \sigma_{l,k}^2}
\end{align}
where  $\sigma_{l,k}^2 \triangleq \sigma^2\boldsymbol{w}_{l,k}^H \boldsymbol{w}_{l,k}$,
\begin{align}
\label{eqn:def_q_lmk}
\boldsymbol{q}_{\bar{l},l,k} &\triangleq \left\{
    \begin{array}{ll}
         \sqrt{P_{l,k}}\boldsymbol{w}^H_{l,k}\left(\boldsymbol{R}_{\bar{l},l,k} + \boldsymbol{S}_{\bar{l},l,k}\right) &~\mathrm{for} ~ l = \bar{l}\\
         \sqrt{P_{l,k}}\boldsymbol{w}^H_{l,k}\left(\boldsymbol{R}_{\bar{l},l,k} + \boldsymbol{T}_{\bar{l},l,k}\right) &~\mathrm{for} ~ l \neq \bar{l}
    \end{array} 
    \right., \mathrm{ and}\\
\label{eqn:barq_lmk}
\bar{q}_{\bar{l},l,k} &\triangleq \left\{
    \begin{array}{ll}
         \sqrt{P_{l,k}}\boldsymbol{w}^H_{l,k}\boldsymbol{U}_{\bar{l},l,k} &\mathrm{for} ~ l = \bar{l}\\
         \sqrt{P_{l,k}}\boldsymbol{w}^H_{l,k}\big(\boldsymbol{U}_{\bar{l},l,k} +\boldsymbol{S}_{\bar{l},l,k}\boldsymbol{\theta}_{l} \big) &\mathrm{for} ~ l \neq \bar{l}
    \end{array} 
    \right.
\end{align}
\normalsize
with $\boldsymbol{S}_{\bar{l},l,k}$, $\boldsymbol{T}_{\bar{l},l,k}$ and $\boldsymbol{U}_{\bar{l},l,k}$ being independent of $\boldsymbol{\theta}_{\bar{l}}$ as defined in (\ref{eqn:definedS}) and (\ref{eqn:definedTU}). 

\end{theorem}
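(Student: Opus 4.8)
The plan is to show that for \emph{every} pair $(l',k')$ the scalar $\boldsymbol{w}_{l,k}^H\boldsymbol{h}_{l',k'}$ that appears (scaled by $\sqrt{P_{l',k'}}$) inside each squared magnitude of (\ref{eqn:gamma_mk}) is an \emph{affine} function of the single free variable $\boldsymbol{\theta}_{\bar{l}}$, i.e. of the form $\boldsymbol{q}^H_{\bar{l},l',k'}\boldsymbol{\theta}_{\bar{l}}+\bar{q}_{\bar{l},l',k'}$. Once this is established, the desired-signal term (the case $(l',k')=(l,k)$) furnishes the numerator, the remaining terms furnish the denominator sum, and the noise term $\sigma^2\boldsymbol{w}_{l,k}^H\boldsymbol{w}_{l,k}$, being independent of $\boldsymbol{\theta}_{\bar{l}}$, is collected into $\sigma_{l,k}^2$; substituting back into (\ref{eqn:gamma_mk}) then yields (\ref{meqn:rxsinr_mk_reduced}). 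So I would start from the effective channel (\ref{eqn:channelmk}) and separate each of its two sums according to whether the relevant IRS index equals $\bar{l}$.

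The primary sum $\sum_{l'=1}^L\boldsymbol{R}_{l',l,k}\boldsymbol{\theta}_{l'}$ is immediate: only $l'=\bar{l}$ carries the variable, contributing the linear block $\boldsymbol{R}_{\bar{l},l,k}\boldsymbol{\theta}_{\bar{l}}$, while $\sum_{l'\neq\bar{l}}\boldsymbol{R}_{l',l,k}\boldsymbol{\theta}_{l'}$ defines the constant $\boldsymbol{U}_{\bar{l},l,k}$ common to both scenarios. The delicate part is the secondary sum $\sum_{l''\neq l}\sum_{j=1}^{M_l}\boldsymbol{Q}_{l'',l,k,j}\boldsymbol{\theta}_{l''}\theta_{l,j}$, which is \emph{bilinear} in $\boldsymbol{\theta}_{l''}$ and $\boldsymbol{\theta}_l$. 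The key observation here — the step I expect to be the main obstacle to state cleanly — is that this term can never be \emph{quadratic} in $\boldsymbol{\theta}_{\bar{l}}$: that would require $l''=\bar{l}$ and $l=\bar{l}$ at once, which the summation constraint $l''\neq l$ forbids. Hence at most one of the two factors depends on $\boldsymbol{\theta}_{\bar{l}}$, $\boldsymbol{h}_{l,k}$ stays affine, and I would split according to which factor is the variable.

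Concretely, when $l=\bar{l}$ the fixed phases $\{\boldsymbol{\theta}_{l''}\}_{l''\neq\bar{l}}$ multiply the variable entries $\theta_{\bar{l},j}$, so the entire secondary sum is linear in $\boldsymbol{\theta}_{\bar{l}}$ with coefficient matrix $\boldsymbol{S}_{\bar{l},\bar{l},k}$ whose $j$-th column is $\sum_{l''\neq\bar{l}}\boldsymbol{Q}_{l'',\bar{l},k,j}\boldsymbol{\theta}_{l''}$, which is why $\boldsymbol{S}$ enters the \emph{linear} coefficient $\boldsymbol{q}_{\bar{l},\bar{l},k}$ in (\ref{eqn:def_q_lmk}). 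When $l\neq\bar{l}$ the entries $\theta_{l,j}$ are fixed and the variable appears only through $\boldsymbol{\theta}_{l''}$ at $l''=\bar{l}$; isolating that single index gives the linear piece $\boldsymbol{T}_{\bar{l},l,k}\boldsymbol{\theta}_{\bar{l}}$ with $\boldsymbol{T}_{\bar{l},l,k}=\sum_{j=1}^{M_l}\boldsymbol{Q}_{\bar{l},l,k,j}\theta_{l,j}$, while the residual indices $l''\neq\bar{l},l$ are all fixed and collapse into the constant $\boldsymbol{S}_{\bar{l},l,k}\boldsymbol{\theta}_l$, explaining why in this case $\boldsymbol{S}$ migrates into $\bar{q}_{\bar{l},l,k}$ in (\ref{eqn:barq_lmk}). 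Finally I would premultiply each collected block by $\sqrt{P_{l,k}}\,\boldsymbol{w}_{l,k}^H$ and read off the coefficient of $\boldsymbol{\theta}_{\bar{l}}$ as $\boldsymbol{q}^H_{\bar{l},l,k}$ and the remainder as $\bar{q}_{\bar{l},l,k}$, matching (\ref{eqn:def_q_lmk})–(\ref{eqn:barq_lmk}) in each of the two scenarios; applying the identical affine decomposition to every interfering channel $\boldsymbol{h}_{l',k'}$ in the denominator then produces (\ref{meqn:rxsinr_mk_reduced}).
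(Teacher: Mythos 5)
Your proposal is correct and follows essentially the same route as the paper's Appendix~\ref{App:SINR_proof}: split the primary sum at $l'=\bar{l}$ to extract $\boldsymbol{R}_{\bar{l},l,k}\boldsymbol{\theta}_{\bar{l}}$ and $\boldsymbol{U}_{\bar{l},l,k}$, observe that the bilinear secondary term is at most linear in $\boldsymbol{\theta}_{\bar{l}}$ (since $l''\neq l$ forbids both factors from being the variable), and case-split on $l=\bar{l}$ versus $l\neq\bar{l}$ to place $\boldsymbol{S}$ in the linear coefficient or in the constant term, respectively, before premultiplying by $\sqrt{P_{l,k}}\,\boldsymbol{w}_{l,k}^H$ and substituting into (\ref{eqn:gamma_mk}). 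Your explicit remark on why no quadratic term in $\boldsymbol{\theta}_{\bar{l}}$ can arise makes the key structural point clearer than the paper's derivation does.
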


\begin{proof}
	See Appendix~\ref{App:SINR_proof}.
\end{proof}
We note that (\ref{meqn:rxsinr_mk_reduced}) has the same formula for all $(l,k)$ but the underlying terms are different for users that belong to the $\bar{l}$-th IRS and those do not, as shown in (\ref{eqn:def_q_lmk}) and (\ref{eqn:barq_lmk}). Hence, when only $\boldsymbol{\theta}_{\bar{l}}$ is varying, the problem (P3.1) in (\ref{meqn:optP3.1}) can be written as the following optimization problem.

\begin{align*}
&\mathrm{(P3.2)} ~ \max_{\boldsymbol{\theta}_{\bar{l}}, \delta} ~~~\delta \numberthis \label{meqn:optP3.2}\\
&~\mathrm{s.t.}~ \left|\boldsymbol{q}^H_{\bar{l},l,k} \boldsymbol{\theta}_{\bar{l}} + \bar{q}_{\bar{l},l,k}\right|^2 \geq 
    \delta \sum\limits_{(l',k')\neq (l,k)} \left| \boldsymbol{q}^H_{\bar{l},l',k'} \boldsymbol{\theta}_{\bar{l}} + \bar{q}_{\bar{l},l',k'}\right|^2 + \delta \sigma_{l,k}^2  \\
& ~~~~~~ |\theta_{\bar{l},j}|=1,~ \forall j=1,2,\ldots,M_{\bar{l}}, ~\mathrm{and ~for ~all} ~(l,k)
\end{align*}
\normalsize
We can rewrite

\begin{align*}
    &\left| \boldsymbol{q}^H_{\bar{l},l,k} \boldsymbol{\theta}_{\bar{l}} + \bar{q}_{\bar{l},l,k}\right|^2 
    =~ \Tr\left( \boldsymbol{B}_{\bar{l},l,k}\tilde{\boldsymbol{\theta}}_{\bar{l}}\tilde{\boldsymbol{\theta}}^H_{\bar{l}} \right) + \left|\bar{q}_{\bar{l},l,k}\right|^2, \mathrm{ where}\\
    &\boldsymbol{B}_{\bar{l},l,k} \triangleq \left[
    \begin{array}{cc}
        \boldsymbol{q}_{\bar{l},l,k}\boldsymbol{q}^H_{\bar{l},l,k} & \bar{q}_{\bar{l},l,k}\boldsymbol{q}_{\bar{l},l,k}  \\
        \bar{q}^H_{\bar{l},l,k}\boldsymbol{q}^H_{\bar{l},l,k} & 0 
    \end{array}
    \right], ~~
    \tilde{\boldsymbol{\theta}}_{\bar{l}} \triangleq \left[
    \begin{array}{c}
        \boldsymbol{\theta}_{\bar{l}}   \\
        t 
    \end{array}
    \right],
\end{align*}
\normalsize
with $t$ is an auxiliary variable ($t=1$ to be exact but we are going to relax this condition). Let define $\Psi_l\triangleq\tilde{\boldsymbol{\theta}}_l\tilde{\boldsymbol{\theta}}^H_l$, we have $\Psi_l \succeq 0$ and $\rank(\Psi_l)=1$. Because the rank-one constraint, $\rank(\Psi_l)=1$, is non-convex, we relax this constraint. For that, the problem (\ref{meqn:optP3.2}) is rewritten as

\begin{align*}
&\mathrm{(P3.3)} ~~  \numberthis \label{meqn:optP3.3} \max_{\Psi_{\bar{l}}, \delta} ~~~\delta, \quad \mathrm{s.t.}~ \Tr(\boldsymbol{B}_{\bar{l},l,k}\Psi_{\bar{l}}) + \left|\bar{q}_{\bar{l},l,k}\right|^2 \geq \quad\\
    \quad& ~\delta \sum\limits_{(l',k')\neq (l,k)} \bigg(\Tr(\boldsymbol{B}_{\bar{l},l',k'}\Psi_{\bar{l}}) ~+~ \left|\bar{q}_{\bar{l},l',k'}\right|^2\bigg) +\delta\sigma_{l,k}^2 , \\
    \quad&~\Psi_{\bar{l}} \succeq 0, ~~ [\Psi_{\bar{l}}]_{jj}=1,~~\forall~ j=1,\ldots,M_{\bar{l}}+1, \mathrm{ and ~for ~all ~} (l,k).
\end{align*}
\normalsize
For a fixed $\delta$, (P3.3) is a convex semidefinite program (SDP) problem and reduced to a feasibility-check problem \cite{zheng2020doubleirs,BoydBook}, which can be optimally solved by standard convex optimization solvers \cite{CVXBoyd}. Therefore, (P3.3) can be efficiently solved by the bisection method; that is, we do a binary search on $\delta$ that has a feasible solution $\Psi_{\bar{l}}$. Once we obtain a maximum $\delta$ (up to a certain numerical accuracy) with a $\Psi_{\bar{l}}$ solution, we will use the Gaussian randomization search to obtain a solution for $\boldsymbol{\theta}_{\bar{l}}$ \cite{ruizhang2019gaussian}. Our simulation show that the numerical solution $\Psi_{\bar{l}}$ is rank deficient (i.e., the max eigen value is much higher than the rest), and we can almost always obtain a numerical solution $\boldsymbol{\theta}_{\bar{l}}$ such that the max-min achievable rate is within a pre-defined error, e.g., 0.1\%, difference from the best $\delta_{opt}$ of the problem (P3.3). To make sure that the algorithm converges we add a heuristic check if the new $\boldsymbol{\theta}_{\bar{l}}$ reduces the original objective in (P1) then it will not be updated such that the objective function in (P1) is always non-decreasing.

\begin{algorithm}[h]
	\caption{An iterative algorithm to solve (P1) in \eqref{meqn:optP1}.}
	\label{algorithm1}
	\begin{algorithmic}[1]
		\State \textbf{Input:} The previous output $\{\bm{w}_{l,k}^{(n-1)}, \bm{\theta}_l^{(n-1)}\}$.
		\State \textbf{Initialize:} $n = 1$, $\{\bm{w}_{l,k}^{(0)}, \bm{\theta}_l^{(0)}\}$, max number of iterations $I_1\ge1$, tolerance $\xi > 0$ and $\epsilon>0$.
		\State \textbf{Compute:} the (minimum) achievable rate $\gamma_{\min}^{(0)}=\min\limits_{l,k}\gamma_{l,k}\!\big( \bm{w}_{l,k}^{(0)}, \bm{\theta}_1^{(0)}, \bm{\theta}_2^{(0)},\ldots, \bm{\theta}_L^{(0)} \big)$.
		\State \textbf{Repeat:}
		\State \quad {Obtain $\bm{w}_{l,k}^{(n)}$ from $\{\bm{\theta}_l^{(n-1)}\}$ by solving \eqref{eqn:mmse_sol};}
		\State \quad {\textbf{Foreach: $\bar{l}=1,2,\ldots,L$:}}
		\State \quad\quad {Obtain $\bm{\theta}_{\bar{l}}^{(n)}$ by solving \eqref{meqn:optP3.3} with $\bm{w}_{l,k}=\bm{w}_{l,k}^{(n)}$, $\bm{\theta}_l=\bm{\theta}_l^{(n)}$ for $l<\bar{l}$, and $\bm{\theta}_l=\bm{\theta}_l^{(n-1)}$ for $l>\bar{l}$, and tolerance $\epsilon$;} 
		\State \quad \textbf{Compute:} the (minimum) achievable rate $\gamma_{\min}^{(n)}=\min\limits_{l,k}\gamma_{l,k}\!\big( \bm{w}_{l,k}^{(n)}, \bm{\theta}_1^{(n)}, \bm{\theta}_2^{(n)},\ldots, \bm{\theta}_L^{(n)} \big)$;
		\State \quad {\textbf{If:}} 
		\State{\quad\quad $\left| \gamma_{\min}^{(n)} - \gamma_{\min}^{(n-1)} \right| \!<\! {\xi}$  ~or~  $n\ge I_1$;} 
		\State \quad {\textbf{Then:}} 
		\State \quad\quad {Set $\{\bm{w}_{l,k}^*, \bm{\theta}_l^*\} = \{\bm{w}_{l,k}^{(n)}, \bm{\theta}_l^{(n)}\}$ and terminate.}
		\State{\quad \textbf{Otherwise:}} 
		\State \quad \quad {Update $n \leftarrow  n + 1$} and continue.
		\State {\textbf{Output:}} The optimal solution $\bm{\hat \chi^*} = \{\bm{w}_{l,k}^*, \bm{\theta}_l^*\}$.
	\end{algorithmic}
\end{algorithm}
\normalsize

\subsection{Iterative AO Algorithm for Max-Min Achievable Rate}

The proposed iterative algorithm is summarized in the \textbf{Algorithm~\ref{algorithm1}}. Its convergence is provided in the following Lemma.


\begin{lemma}
	\label{lemma:convergence_and_local_opt}
	The objective values $\gamma_{\min}^{(n)}$ as shown in the Algorithm~\ref{algorithm1} is monotonically non-decreasing and thus the algorithm is guaranteed to be converged.
\end{lemma}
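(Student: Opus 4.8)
The plan is to prove the claim in two stages. First I would show that the scalar sequence $\{\gamma_{\min}^{(n)}\}_n$ produced by Algorithm~\ref{algorithm1} is monotonically non-decreasing by inspecting each atomic update within a single outer iteration and checking that none of them lowers the objective $\min_{l,k}\gamma_{l,k}$. Then I would argue that this sequence is bounded from above, so that the Monotone Convergence Theorem forces it to converge. Since $\log_2(1+\cdot)$ is strictly increasing, monotonicity and convergence of the min-SINR immediately transfer to the min-rate objective of (P1), so it suffices to track $\gamma_{\min}^{(n)}$.

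For the monotonicity I would decompose outer iteration $n$ into its beamformer step and its $L$ phase-shift steps. In the beamformer step (line~5 of Algorithm~\ref{algorithm1}), the phase-shifts are frozen at $\{\boldsymbol{\theta}_l^{(n-1)}\}$, so the effective channels $\boldsymbol{h}_{l,k}$ in (\ref{eqn:channelmk}) are fixed; because the receive beamformer $\boldsymbol{w}_{l,k}$ enters only $\gamma_{l,k}$ and the MMSE/MVDR solution (\ref{eqn:mmse_sol}) is the exact maximizer of the convex subproblem (P2), each individual $\gamma_{l,k}$ is independently maximized, hence so is their pointwise minimum. This gives $\min_{l,k}\gamma_{l,k}(\{\boldsymbol{w}_{l,k}^{(n)}\},\{\boldsymbol{\theta}_l^{(n-1)}\})\ge\gamma_{\min}^{(n-1)}$. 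For each inner phase-shift update (line~7) I would invoke the acceptance test built into the algorithm: holding $\{\boldsymbol{w}_{l,k}^{(n)}\}$ fixed, the candidate $\boldsymbol{\theta}_{\bar l}$ recovered from the relaxed SDP (P3.3) and Gaussian randomization is adopted only if it does not decrease the (P1) objective, and is otherwise reverted to its previous value. Each inner update is therefore non-decreasing, and chaining the beamformer step with the $L$ guarded phase-shift steps yields $\gamma_{\min}^{(n)}\ge\gamma_{\min}^{(n-1)}$.

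For boundedness I would use that, for any feasible configuration, the beamformer-optimal value of $\gamma_{l,k}$ equals $P_{l,k}\,\boldsymbol{h}_{l,k}^H\boldsymbol{C}_{l,k}^{-1}\boldsymbol{h}_{l,k}$, where $\boldsymbol{C}_{l,k}\triangleq\sum_{(l',k')\neq(l,k)}P_{l',k'}\boldsymbol{h}_{l',k'}\boldsymbol{h}_{l',k'}^H+\sigma^2\boldsymbol{I}$ is the interference-plus-noise covariance. Since $\boldsymbol{C}_{l,k}\succeq\sigma^2\boldsymbol{I}$, we have $\boldsymbol{C}_{l,k}^{-1}\preceq\sigma^{-2}\boldsymbol{I}$ and hence $\gamma_{l,k}\le P_{l,k}\|\boldsymbol{h}_{l,k}\|^2/\sigma^2$. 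The unit-modulus constraints $|\theta_{l,j}|=1$ confine each $\boldsymbol{\theta}_l$ to a compact torus with $\|\boldsymbol{\theta}_l\|^2=M_l$, and the cascaded matrices $\{\boldsymbol{R}_{l',l,k}\},\{\boldsymbol{Q}_{l',l,k,j}\}$ are fixed, so by (\ref{eqn:channelmk}) the norms $\|\boldsymbol{h}_{l,k}\|$ are bounded by a finite constant independent of the iteration. Thus $\gamma_{\min}^{(n)}$ is bounded above uniformly in $n$, and a monotonically non-decreasing real sequence that is bounded above converges, which establishes the lemma.

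The main obstacle is that, unlike the beamformer step, the phase-shift step does \emph{not} solve its subproblem (P3.1) optimally: the rank-one constraint on $\Psi_{\bar l}$ is dropped in (P3.3) and a feasible point is merely recovered by Gaussian randomization, so a priori the updated $\boldsymbol{\theta}_{\bar l}$ could strictly decrease the min-SINR. Consequently the monotonicity cannot be deduced from the generic ``each block-coordinate update is optimal'' argument used for standard alternating optimization; it rests entirely on the explicit acceptance test that rejects any non-improving candidate. I would therefore make this dependence explicit in the write-up, emphasizing that the safeguard is precisely what converts a potentially non-monotone relaxation-plus-randomization step into a guaranteed non-decreasing one, after which boundedness and the Monotone Convergence Theorem close the argument.
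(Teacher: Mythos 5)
Your proof is correct and follows the same route the paper sketches: monotonicity from the exact MMSE beamformer update plus the explicit acceptance test guarding the (suboptimal) SDP-relaxation/randomization phase-shift updates, and convergence from the SINR being bounded above by the SNR together with the Monotone Convergence Theorem. You usefully supply the details the paper omits, in particular the observation that the rank-relaxed step alone would not guarantee monotonicity and that the safeguard is what makes the argument go through.
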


\begin{proof}
	It can be shown that at each step the objective values is monotonically non-decreasing, also due to the limited transmit power and thermal noise, the SINR is bounded above by the SNR; hence it follows the Lemma \ref{lemma:convergence_and_local_opt}. The detailed proof is omitted due to space limitation.
\end{proof}

\subsection{Complexity Analysis}
The complexity of solving optimization problem (\ref{meqn:optP1}) is the combined complexity of solving the beamforming weights as shown in (\ref{eqn:mmse_sol}) whose complexity is mainly from the computation of inverse of an $N\times N$ matrix, hence $\mathcal{O}(N^3)$. Similar to the analysis in \cite{YangVPoor2020}, the complexity of solving the $l$-th SDP optimization problem in (\ref{meqn:optP3.3}) with the bisection method, which has $M^2_l$ variables, is  $\mathcal{O}(M^{4.5}_l \log(1/\epsilon))$. Here $\epsilon$, as shown in the Algorithm 1, is the accuracy of the bisection search. The complexity for $L$ sub-problems is $\mathcal{O}((M^{4.5}_1+\ldots+M^{4.5}_L)\log(1/\epsilon))$. Hence, the total complexity of Algorithm \ref{algorithm1} is $\mathcal{O}(I_1((M^{4.5}_1+\ldots+M^{4.5}_L)\log(1/\epsilon) + N^3)$, or $\mathcal{O}(I_1(LM^{4.5}_1\log(1/\epsilon) + N^3)$ when $M_l=M_1$ for all $l$.

It is worth noting that at each step, the solution $\bm{\theta}^*_l$ results an objective value closed to the optimal value up to a tolerance $\epsilon$. 
Our simulations below show that the Algorithm 1 converges after about 50 iterations.


\section{Numerical results}
\label{sec:Numerical}
In this section, we consider an uplink OFDM system in which each user is allocated with $B=180$KHz bandwidth. Thus, the noise power at the BS is assumed to be $\sigma^2=-174+10\log_{10}(B)$ dBm. We assume LOS channel models between BS and IRSs and between any two IRSs with pathloss exponent $\alpha_1=2.2$ and Rician factor of 5dB, and non-LOS channel model between the IRSs and users with pathloss exponent $\alpha_1=3.0$. Also, the elements of BS and RISs have 5 dBi gain, while users have a single antenna with 0 dBi gain \cite{Bjornson2020}. We assume that the BS has 16 antennas while varying the number of IRSs as well as the number of elements per IRS from 32 to 128. Similar to the simulation setup and assumption in \cite{QNadeem2020,ZhangUAV2021}, we consider a small-cell situation as shown in Figure \ref{fig:IRSLocation}. Particularly, we assume the distance between BS and the nearest IRS is about 60 meters, multiple IRSs are serving multiple users groups in an area of 40$\times$20 squared meters. The locations of users are assumed to be uniformly distributed. In this figure, there are 4 IRSs in which two adjacent IRSs are 10 meters apart. Each IRS serves 3 users in an area of 10$\times$20 squared meters. We assumed the heights of BS, IRSs, and users are 25, 30 and 1.5 meters, respectively. The transmit power at each user is assumed to be the same. 

In Figure \ref{fig:IRSconfig}, we consider the achievable rate for the lowest-rate user with various configurations: (i) 4 IRSs with 3 users, (ii) 2 IRSs with 6 users and (iii) 1 IRS with 12 users; that is, there are total 12 users. The locations for 4 IRSs scenario are shown in Figure \ref{fig:IRSLocation}. When there are 2 IRSs, their locations are at $(x_{irs},y_{irs})=(60,\pm10)$ meters, and for a single IRS scenario, its location is at $(60,10)$ meters. In all these configurations, each IRS has 64 elements and the BS has 16 antennas. We observe that with the same number of users and receive antennas, the more IRSs are used, the higher the achievable rate. At the transmit power at 30 dBm, for instance, we observe a 1.5 bps/Hz or 2.4 bps/Hz gain for a 2 or 4 IRSs system as compared to a single IRS system, respectively. This is because of higher received signal power at the receiver antennas due to the reflections from the additional IRSs. 
In this figure, we also plot the average achievable rate among all users and observe that the average achievable rate follows a similar trend, and is not much different from the lowest rate. Hence, in the following, we will consider the performance for the lowest rate user only.

\begin{figure}[t]
  \centering
  \includegraphics[width=85mm,scale=1.0]{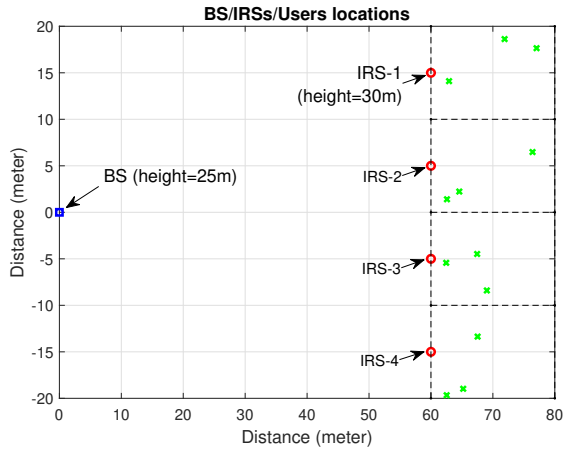}
  \caption{A snapshot of BS, IRSs, and users locations}
  \label{fig:IRSLocation}
\end{figure}

\begin{figure}[t]
  \centering
  \includegraphics[width=80mm,scale=1.0]{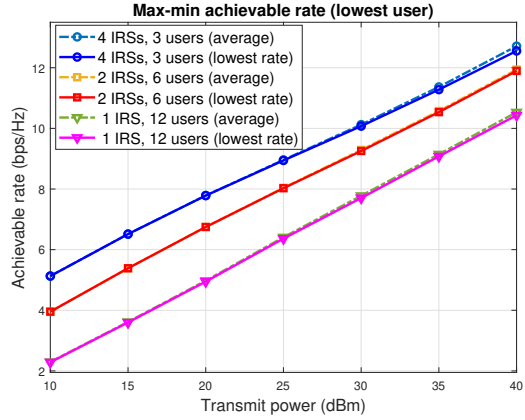}
  \caption{Achievable rate with various IRS configurations}
  \label{fig:IRSconfig}
\end{figure}

\begin{figure}[t]
  \centering
  \includegraphics[width=80mm,scale=1.0]{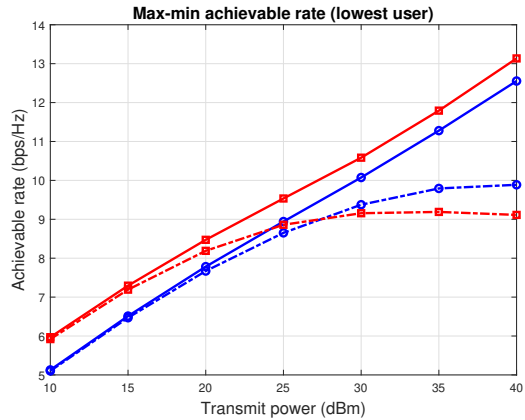}
  \caption{Leveraging secondary reflections amongst IRSs}
  \label{fig:skipSecondary}
\end{figure}

\begin{figure}[t]
  \centering
  \includegraphics[width=80mm,scale=1.0]{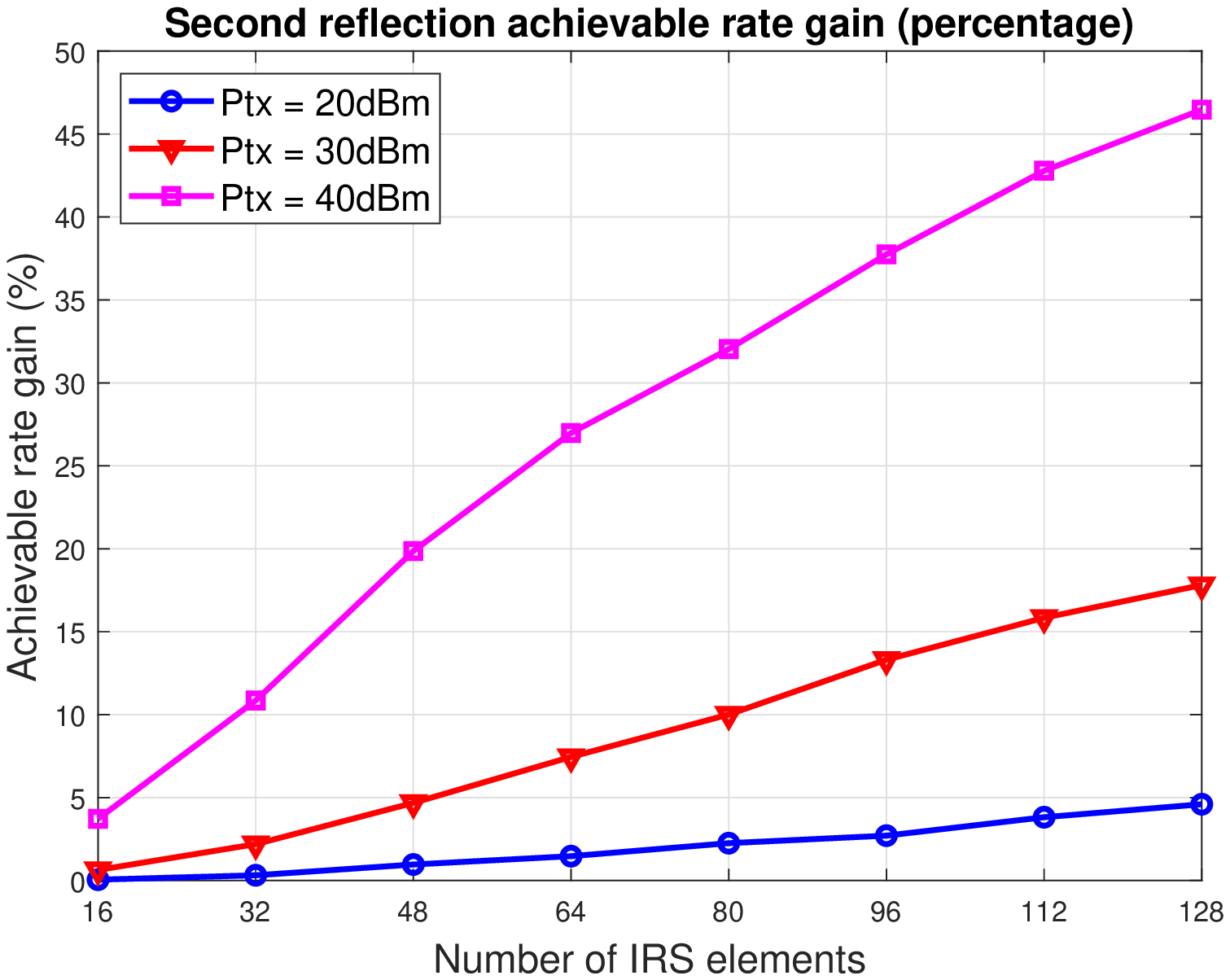}
  \caption{Secondary refl. gain vs number of IRS elements}
  \label{fig:GainvsIRSElements}
\end{figure}

Figure \ref{fig:skipSecondary} compares the minimum achievable rate when both primary and secondary reflections are considered (as in this paper) and when only the primary reflections are considered (as in other aforementioned works \cite{YangVPoor2020}-\cite{zheng2020doubleirs}). We assume the IRSs are equally spaced, i.e., the distance between two adjacent IRSs for 4 and 6 IRSs scenarios are 10 and 6.67 meters, respectively. It can be seen that by properly managing the interference and leveraging the secondary reflections amongst IRSs, the minimum achievable rate is significantly improved. For example, at transmit power 35 dBm, we observe about 1.5 or 2.6 bps/Hz gain when exploiting the secondary reflection for a system with 4 or 6 IRSs, respectively. The improvement becomes more pronounced with higher SNR. This is because at high SNR, the secondary reflections amongst IRSs become more significant while the stronger interference amongst them can be mitigated/managed under our framework. 

The management of inter-IRS interference becomes more critical for dense networks. Specifically, we observe that for a given area, without properly managing the secondary reflections, adding more IRSs may actually degrade the overall system performance. This is due to stronger interference caused by the secondary reflections of the additional IRSs. In the plot, when the secondary reflection is not managed using our proposed method, we observed that the achievable rate of 6 IRSs is worse than that of 4 IRSs at transmit power of 27.0 dBm or higher. With proper managing interference using our method, the performance of 6 IRSs always outperforms that of 4 IRSs as expected.

\begin{figure}[t]
  \centering
  \includegraphics[width=85mm,scale=1.0]{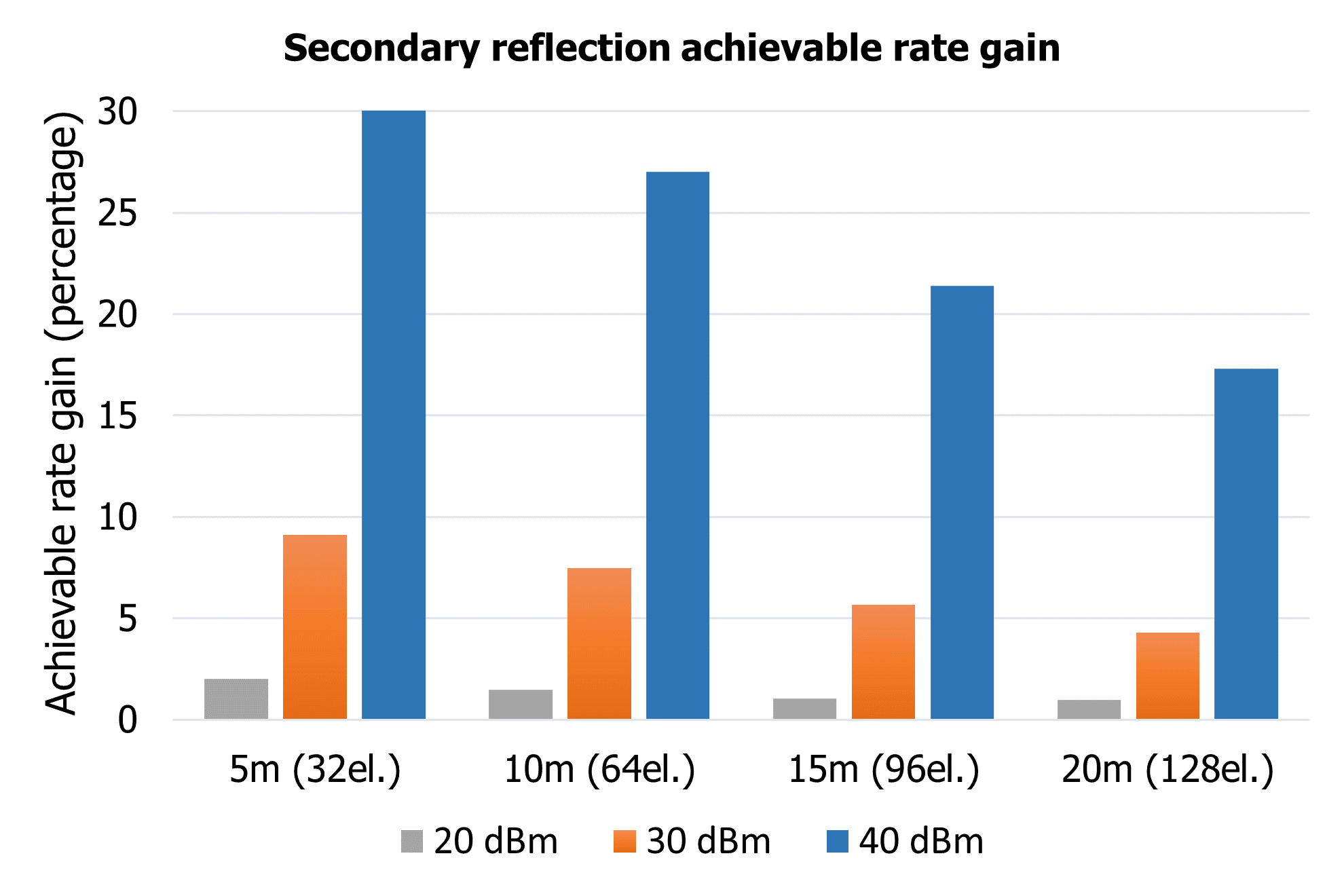}
  \caption{Secondary reflection gain vs IRS distance and number of elements}
  \label{fig:GainvsDistNumElems}
\end{figure}

In Figure \ref{fig:GainvsIRSElements}, we plot the relative gain of the achievable rate with and without considering the secondary reflections for a system with 4 IRSs each with 3 users. As can be seen, the more number of IRS elements, the higher the gain of the secondary reflections. For example, at a transmit power (Ptx) of 30 dBm, we record about 7.5\% and 17.8\% gain for 64 and 128 elements per IRS, respectively. At Ptx=40 dBm, the corresponding gains are 27.0\% and 46.5\%. The gain is increased due to the fact that the more IRS elements, the more reflection energy among IRSs thus increasing in the secondary reflection interference.

In Figure \ref{fig:GainvsDistNumElems}, we plot the achievable rate gain as a result of secondary reflections for a system with 4 IRSs and 3 users while varying the distance between two adjacent IRSs and the number of IRSs elements. When the distance between IRSs are larger, the secondary reflection gains are smaller as expected, but with increased number of IRS elements, the gain can be significantly maintained. For example, at Ptx=40 dBm, the gains for (5m, 32el.) config, i.e., 5m distance and 32 IRS elements and (10m, 64 el.) are 30.2\% and 27.0\%, respectively, and the gains for (15m, 96el.) and (20m, 128el.) are 21.4\% and 17.3\%, respectively.

Lastly, we plotted the achievable rate versus the number of iterations, $I_1$, without early break (i.e., we set $\xi=0$); however, the plot is omitted due to space limitation. We observed that the algorithm reasonably converges after about 50 iterations for Ptx=10 dBm. It requires a slightly higher number of iterations for higher Ptx.

\section{Conclusion}
In this paper aimed to manage the interference and leverage the secondary reflections in systems with multiple IRSs. For that, we considered an uplink multiple IRSs, multiple users MIMO system, where the received beamforming at the BS and the phase-shifts at the IRS elements were co-designed. We derived the closed-form SINR for each user at the BS, and proposed the max-min optimization problem which was shown to be non-convex. An AO algorithm were discussed to solve this problem to obtain a sub-optimal solution. The numerical results showed that by managing the interference and leveraging the secondary reflections amongst IRSs, one can significantly improve the system throughput when more IRSs or more elements per IRS are deployed.

\appendices
\section{Derivation of SINR Formula in Theorem \ref{theorem:GenericSINR}}
\label{App:SINR_proof}
First, we re-write the $\boldsymbol{h}_{l,k}$ as shown in (\ref{eqn:channelmk}) as a function of $\boldsymbol{\theta}_{\bar{l}}$. 

When $l\neq\bar{l}$, we have
\begin{align*}
    \boldsymbol{h}_{l,k} = & \sum_{l'=1}^L\boldsymbol{R}_{l',l,k}\boldsymbol{\theta}_{l'} + \sum_{l'\neq l}^L \sum_{j=1}^{M_{l}} \boldsymbol{Q}_{l',l,k,j} \boldsymbol{\theta}_{l'} \theta_{l,j} \\
    =&\boldsymbol{R}_{\bar{l},l,k}\boldsymbol{\theta}_{\bar{l}} + \sum_{{l'=1},{l'\neq\bar{l}}}^L\boldsymbol{R}_{l',l,k}\boldsymbol{\theta}_{l'} \\
    ~~&+\Bigg(\sum_{j=1}^{M_l} \boldsymbol{Q}_{\bar{l},l,k,j} \theta_{l,j} \Bigg) \boldsymbol{\theta}_{\bar{l}}  + \sum_{j=1}^{M_{l}} \Bigg(\sum_{l'\neq l,l'\neq \bar{l}}^L  \boldsymbol{Q}_{l',l,k,j} \boldsymbol{\theta}_{l'} \Bigg) \theta_{l,j} \\
    =& \left(\boldsymbol{R}_{\bar{l},l,k}+\boldsymbol{T}_{\bar{l},l,k}\right)\boldsymbol{\theta}_{\bar{l}} + \Bigg(\boldsymbol{U}_{\bar{l},l,k} +  \boldsymbol{S}_{\bar{l},l,k} \boldsymbol{\theta}_{l}\Bigg) \numberthis\label{eqn:channelmk_proof1}
\end{align*} 
where we defined
\begin{align}
    \label{eqn:definedS}
    \boldsymbol{S}_{\bar{l},l,k} \triangleq \left[\boldsymbol{s}_{\bar{l},l,k,1},\ldots,\boldsymbol{s}_{\bar{l},l,k,M_l}\right]
\end{align}
with  $\boldsymbol{s}_{\bar{l},l,k,j} \triangleq \sum\limits_{{l'\neq \bar{l}},{l'\neq l}} \boldsymbol{Q}_{l',l,k,j}\boldsymbol{\theta}_{l'}$,
\begin{align}
    \label{eqn:definedTU}
    \boldsymbol{T}_{\bar{l},l,k} \triangleq \sum\limits_{j=1}^{M_{\bar{l}}}\boldsymbol{Q}_{\bar{l},l,k,j}\theta_{\bar{l},j}, ~\mathrm{and }~
    \boldsymbol{U}_{\bar{l},l,k} \triangleq \sum_{{}^{l'=1}_{l'\neq\bar{l}}}^L\boldsymbol{R}_{l',l,k}\boldsymbol{\theta}_{l'}.
\end{align}
When $l=\bar{l}$, we have
\begin{align*}
    \boldsymbol{h}_{l,k} =&\boldsymbol{R}_{\bar{l},l,k} \boldsymbol{\theta}_{\bar{l}} + \sum_{{}^{l'=1}_{l'\neq\bar{l}}}^L \boldsymbol{R}_{l',l,k}\boldsymbol{\theta}_{l'} + \sum_{j=1}^{M_{\bar{l}}} \sum_{l'\neq \bar{l}}^L  \boldsymbol{Q}_{l',l,k,j} \boldsymbol{\theta}_{l'} \theta_{\bar{l},j} \\
    =&\boldsymbol{R}_{\bar{l},l,k}\boldsymbol{\theta}_{\bar{l}} + \boldsymbol{U}_{\bar{l},l,k} +  \boldsymbol{S}_{\bar{l},l,k} \boldsymbol{\theta}_{\bar{l}} \\
    =&\left(\boldsymbol{R}_{\bar{l},l,k} + \boldsymbol{S}_{\bar{l},l,k} \right) \boldsymbol{\theta}_{\bar{l}} + \boldsymbol{U}_{\bar{l},l,k} \numberthis \label{eqn:channelmk_proof_eq}
\end{align*}
Now plug in (\ref{eqn:channelmk_proof1}) and (\ref{eqn:channelmk_proof_eq}) to (\ref{eqn:gamma_mk}), we will obtain (\ref{meqn:rxsinr_mk_reduced}) with $\boldsymbol{q}_{\bar{l},l,k}$ and $\bar{q}_{\bar{l},l,k}$ are given in (\ref{eqn:def_q_lmk}) and (\ref{eqn:barq_lmk}), respectively. \qed




\end{document}